\theoremstyle{plain}
\newtheorem{theorem}{Theorem}
\newtheorem{proposition}[theorem]{Proposition}
\newtheorem{corollary}[theorem]{Corollary}
\theoremstyle{definition}
\newtheorem{example}[theorem]{Example}
\numberwithin{exercise}{section}
\numberwithin{equation}{section}
\numberwithin{theorem}{section}
\numberwithin{problem}{section}
\numberwithin{figure}{section}
\DeclareMathOperator{\Int}{int}
\DeclareMathOperator{\diag}{diag}
\newcommand{\bs}[1]{{\boldsymbol{#1}}}
\newcommand{\R}{\mathbf{R}}
\begin{document}

\title{{Adaptive Fitness Landscape for Replicator Systems:\\ To Maximize or not to Maximize}}

\author{Alexander S. Bratus$^{1,2,}$\footnote{e-mail: alexander.bratus@yandex.ru}, Artem S. Novozhilov$^{{3},}$\footnote{e-mail: artem.novozhilov@ndsu.edu}, Yuri S. Semenov$^{2,}$\footnote{e-mail: yuri\_semenoff@mail.ru} \\[3mm]
\textit{\normalsize $^\textrm{\emph{1}}$Faculty of Computational Mathematics and Cybernetics,}\\[-1mm]
\textit{\normalsize Lomonosov Moscow State University, Moscow 119992, Russia}\\[2mm]
\textit{\normalsize $^\textrm{\emph{2}}$Applied Mathematics–1, Moscow State University of Railway Engineering,}\\[-1mm]
\textit{\normalsize Moscow 127994, Russia}\\[2mm]
\textit{\normalsize $^\textrm{\emph{3}}$Department of Mathematics, North Dakota State University, Fargo, ND 58108, USA}}

\date{}

\maketitle

\begin{abstract}
Sewall Wright's adaptive landscape metaphor penetrates a significant part of evolutionary thinking. Supplemented with Fisher's fundamental theorem of natural selection and Kimura's maximum principle, it provides a unifying and intuitive representation of the evolutionary process under the influence of natural selection as the hill climbing on the surface of mean population fitness. On the other hand, it is also well known that for many more or less realistic mathematical models this picture is a sever misrepresentation of what actually occurs. Therefore, we are faced with two questions. First, it is important to identify the cases in which adaptive landscape metaphor actually holds exactly in the models, that is, to identify the conditions under which system's dynamics  coincides with the process of searching for a (local) fitness maximum. Second, even if the mean fitness is not maximized in the process of evolution, it is still important to understand the structure of the mean fitness manifold and see the implications of this structure on the system's dynamics. Using as a basic model the classical replicator equation, in this note we attempt to answer these two questions and illustrate our results with simple well studied systems.
\paragraph{\small Keywords:} Adaptive fitness landscape, population fitness, Fisher's theorem of natural selection, Kimura's maximum principle

\paragraph{\small AMS Subject Classification: } 92D10, 92D15

\end{abstract}

\section{Introductory remarks and mathematical problem statement}

The infamous Fisher's \textit{theorem of natural selection} asserts that ``the rate of increase in [mean] fitness of any organism at any time is equal to its genetic variance in fitness at that time''\cite{fisher1930genetical}. This theorem, for which Fisher himself did not provide any mathematical justification, can be indeed proved within the simplest possible framework of natural selection acting on a haploid population with $n$ alleles in the absence of mutations and recombination, assuming that we deal with an infinite population. Let $u_i$ denote the frequency of the $i$-th allele and let $m_i$ be the corresponding Malthusian fitness, then a very simple \textit{replicator equation}
\begin{equation}\label{eq0:1}
\dot u_i=u_i(m_i-\bar m(\bs u)),\quad i=1,\ldots,n,
\end{equation}
where
$$
\bar{m}(\bs u)=\sum_{i=1}^n m_iu_i,
$$
describes the change with time of the vector of frequencies $\bs u(t)=(u_1(t),\ldots,u_n(t))\in S_n$ that belongs for any time moment $t$ to the simplex $$S_n=\{\bs u\in\R^n\colon u_i\geq 0,\,\sum_{i=1}^n u_i=1\}.$$ Considering $\bar m$ as a function of $t$ and calculating the derivative of this function along the orbits of our system, we get
$$
\dot{\bar m}(t)=\sum_{i=1}^n m_i u_i^2(t)-\left(\sum_{i=1}^n m_i u_i(t)\right)^2=\textsf{Var}(t)\geq 0,
$$
which can be taken literally as a mathematical form of Fisher's theorem.

Moreover, system \eqref{eq0:1} can be also used to illustrate another important metaphor of theoretical population genetics, --- namely, Wright's \textit{adaptive landscape} \cite{wright1932roles}. In words, Wright envisioned a surface of the mean fitness and argued that the mean fitness maxima represent the so-called adaptive picks, to which natural selection leads evolving populations. For system \eqref{eq0:1} we can give an illustration of this process by considering how the mean population fitness, which, according to the calculations above, is non-decreasing, changes with time. To be precise and introduce notations that we will use in the following, let $\gamma_t$ be an orbit of dynamical system \eqref{eq0:1} (see Fig. \ref{fig0:1}(a)), whose behavior is quite simple because, assuming there is unique maximum $m_1=\max\{m_1,\ldots,m_n\}$, we find $\bs u(t)\to \bs e_1$, where $\bs e_1=(1,0,\ldots,0)$. The mean fitness surface is given by
$$
\Sigma=\{(\bar m,\bs u)\colon \bar m(\bs u)=\sum_{i=1}^n m_i u_i,\,\bs u\in S_n\},
$$
and to each $\gamma_t\in S_n$ there corresponds a curve $\Gamma_t$ on $\Sigma$, which in this particular case will be approaching the highest vertex on the hyperplane $\bar m(\bs u)=\sum_{i=1}^n m_iu_i$ that also must satisfies the constraint $\bs u\in S_n$ (see Fig. \ref{fig0:1}).
\begin{figure}[!t]$\quad$
\includegraphics[width=0.5\textwidth]{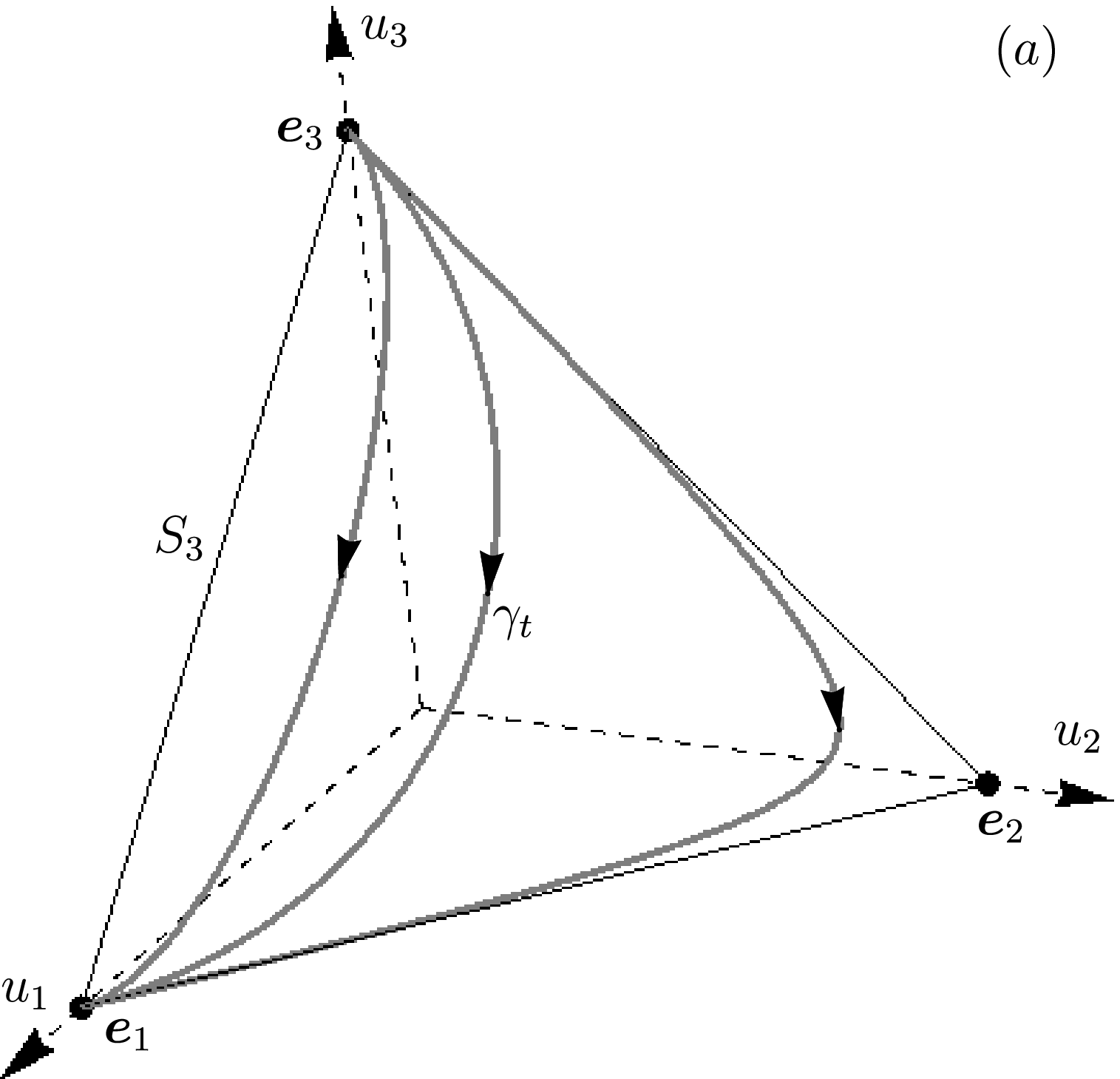}\hfill
\includegraphics[width=0.4\textwidth]{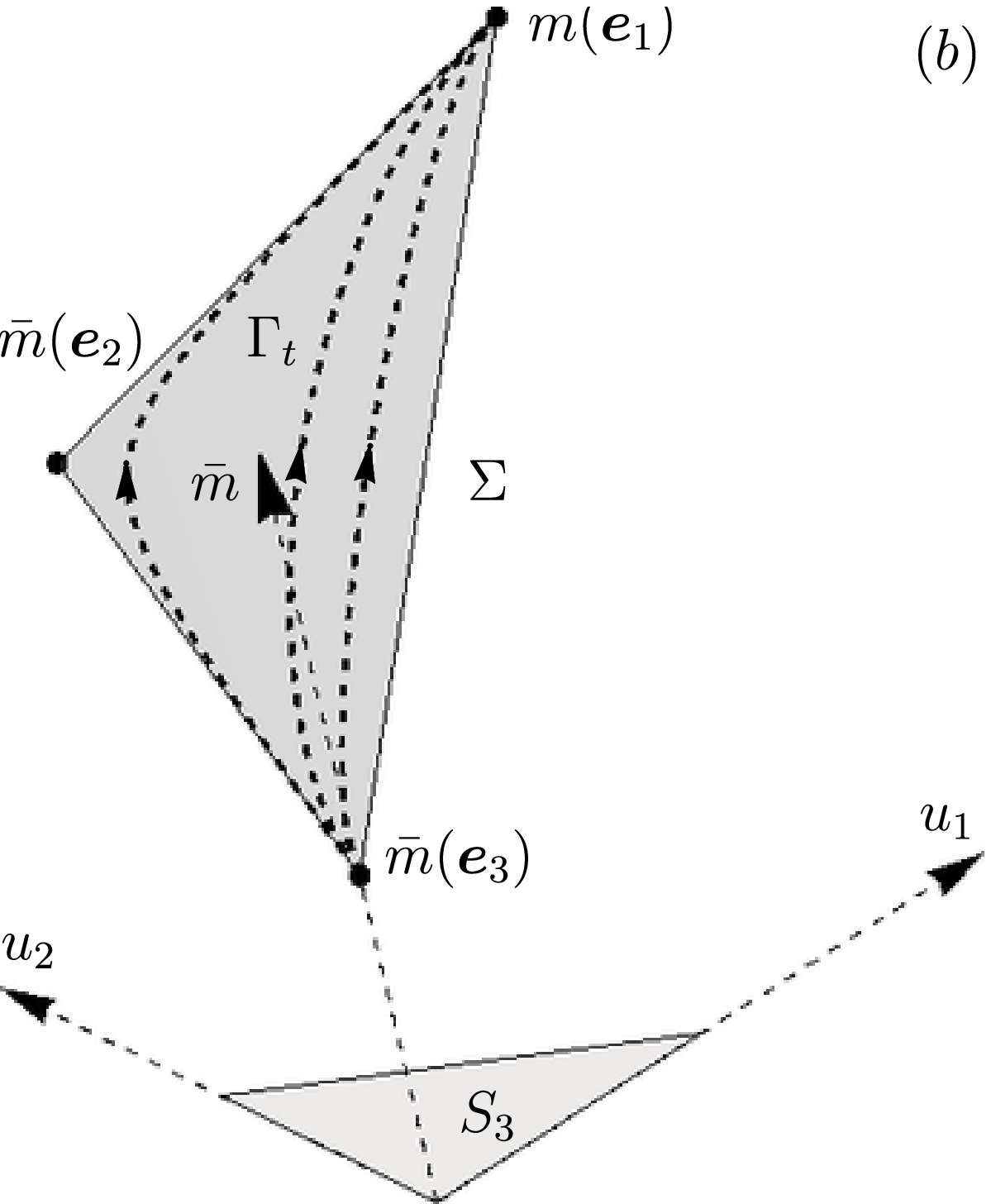}$\quad$
\caption{An illustration of the two important metaphors of evolutionary biology: Fisher's theorem of natural selection and Wright's adaptive fitness landscape. $(a)$ The phase space of system \eqref{eq0:1} in the case of $S_3$, the parameters are $m_1=3,m_2=2,m_1=1$. There are three equilibria, one of which, $\bs e_1$, is globally asymptotically stable. Three orbits are shown. $(b)$ The surface of the mean fitness $\Sigma$ that corresponds to the picture on the left. The three curves on $\Sigma$ are shown that correspond to the three orbits on the left. Note that the curves do not ``climb'' the mean fitness surface in the direction of the maximal fitness increase.}\label{fig0:1}
\end{figure}

Of course, Sewall Wright had a more general picture in mind. For example he admitted the existence of several adaptive picks, such that the population evolves in the direction of a nearest one, which may be only local and not global maximum. We can illustrate this idea using yet another extremely simple mathematical model, namely, the model of an autocatalytic growth. Now
\begin{equation}\label{eq0:2}
    \dot u_i=u_i(a_i u_i-\bar m(\bs u)),\quad i=1,\ldots,n,
\end{equation}
where now to guarantee that the simplex $S_n$ is invariant, we have that the mean population fitness is defined as
$$
\bar m(\bs u)=\sum_{i=1}^n a_i u_i^2.
$$
In this case, if we still stick to a haploid interpretation of our model, we now deal with a frequency dependent selection, albeit with its simplest case. Again, as in the previous example, it is straightforward to show that
$$
\dot{\bar {m}}(t)\geq 0,
$$
and all but one alleles disappear. Now, however, which allele will survive depends also on the initial conditions, as well as the parameters of the systems. Note that each vertex of the simplex is asymptotically stable and each has a basin of attraction of non-zero measure (see Fig. \ref{fig0:2}).
\begin{figure}[!t]$\quad$
\includegraphics[width=0.51\textwidth]{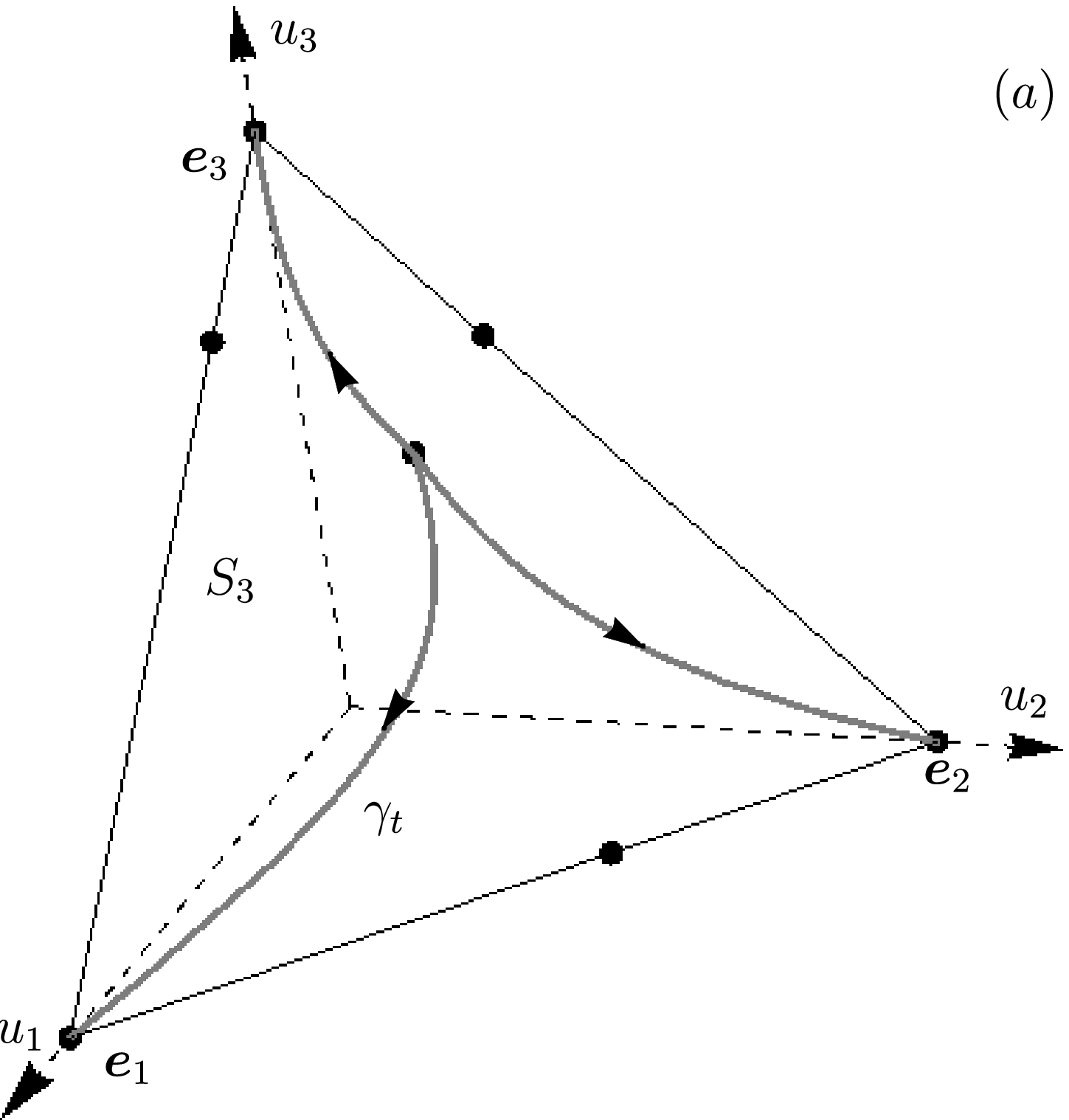}\hfill
\includegraphics[width=0.37\textwidth]{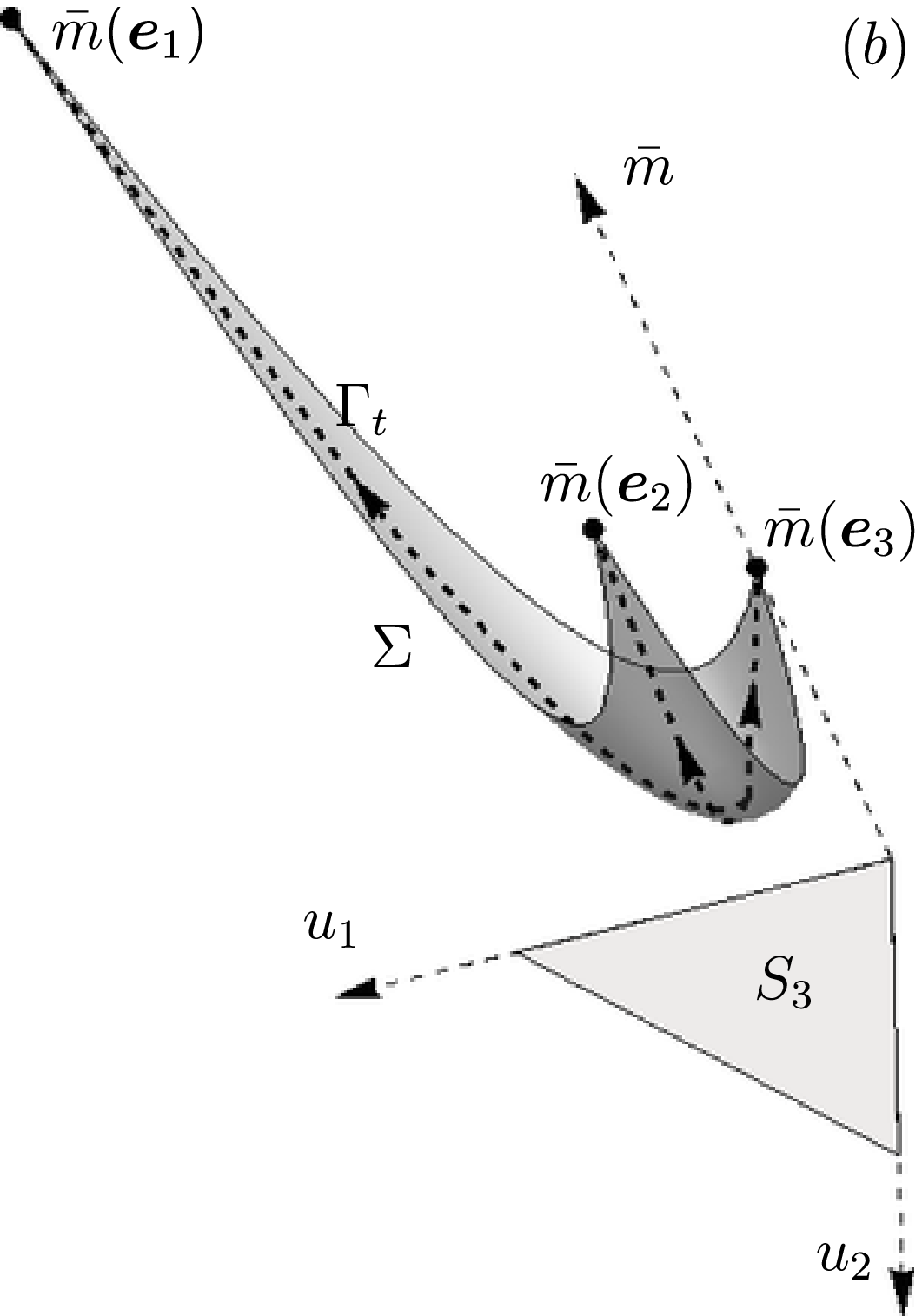}$\quad$
\caption{The phase space $(a)$ and adaptive fitness landscape $(b)$ of the replicator system with autocatalytic growth \eqref{eq0:2} on $S_3$. The dots denote the equilibria of the system. The parameters are $a_1=3,a_2=2,a_3=1$. There are three adaptive picks on the fitness landscape exactly at the vertices of the simplex, but the dynamics is still simple: Depending on the initial conditions, the population reaches the closest fitness maximum, at which all but one alleles are absent.}\label{fig0:2}
\end{figure}

Taken together, Wright's adaptive landscape and Fisher's theorem of natural selection can, in some sense, to be taken as a mathematical manifestation of a ``layman's'' understanding of how natural selection works. In short, the natural selection can be regarded as a process of fitness maximization (much more biological context can be found in \cite{birch2015natural}). On the other hand, it is well known in population genetics that the fitness is maximized only in very simple models, such as \eqref{eq0:1} or \eqref{eq0:2}, see an extensive discussion in \cite{birch2015natural} and references therein. It is not our goal in this short note to discuss various implications of these two seemingly mutually inconsistent points of view at the evolutionary process. We agree, however, with the general premise of \cite{birch2015natural} that it is important to reconcile them and, in particular, find maximization principles that apply conditionally. In other words, it is important to understand under which conditions our theoretical mathematical models imply fitness maximization.

For example, one such well known condition for the replicator equation of the form
$$
\dot u_i=u_i\Bigl(\sum_{j=1}^n a_{ij}u_j-\sum_{k,l=1}^n a_{kl}u_ku_l\Bigr)
$$
is that the matrix $\bs A=[a_{ij}]$ that defines this equation is symmetric, $\bs A=\bs A^\top$. In the language of the game theory we deal in this case with the so-called partnership games. It is known that in this case the mean fitness $\bar m(\bs u)=\sum_{i,j=1}^na_{ij} u_iu_j$ is 1) non-decreasing, since it is a strict Lyapunov function, and, even more, 2) it is a potential for the replicator equation, which hence becomes in this case a gradient system. In the language of theoretical population genetics it is said that in this case the state of the gene pool evolves under the influence of natural selection that not only the fitness increases, but also at a maximal rate (this is often called \textit{Kimura's maximum principle} \cite{kimura1958change}).

The nontrivial point here to see is that Kimura's maximum principle becomes a precise theorem only if one replaces the usual Euclidian metric, generated by the standard dot product, with another Riemannian metric generated by the so-called Shahshahani inner product \cite{shahshahani1979new,hofbauer1998ega}. This maximum principle can be extended for some particular cases of frequency depended selection \cite{sigmund1987maximum}, but in general we know that if $\bs A\neq \bs A^\top$ for the replicator equation, the mean fitness is not always maximized (actually, it can be shown that in this case the so-called KL-divergence between the initial and current distributions is minimized, see \cite{karev2010replicator} for information--theoretic approach to the replicator equation), and are left with an open question what are the necessary or/and sufficient conditions on the matrix $\bs A$ of the replicator equation, under which the dynamics of its solutions coincides with mean fitness maximization. In this note we incept such a study, derive several such conditions and illustrate them with examples of well studied systems.

\section{Replicator Equation}
Consider the now classical replicator equation (see, e.g., \cite{hofbauer1998ega,hofbauer2003egd} and references therein)
\begin{equation}\label{eq2:1}
    \dot u_i=u_i\bigl((\bs{Au})_i-\bs u\cdot\bs{Au} \bigr),\quad i=1,\ldots,n,
\end{equation}
where $\bs u\in S_n$, matrix $\bs A=[a_{ij}]$ is given and the mean fitness is
\begin{equation}\label{eq2:2}
\bar m(\bs u)=\bs u\cdot \bs{Au}=\sum_{i,j=1}^na_{ij}u_iu_j.
\end{equation}
Here and below the dot denotes the standard dot product in $\R^n$: $\bs x\cdot\bs y=\sum_{i=1}^n x_iy_i$. Together with solutions to \eqref{eq2:1} we also consider the surface of the mean fitness:
$$
\Sigma=\Bigl\{(\bar m,\bs u)\colon \bar{m}(\bs u)=\sum_{i,j=1}^n a_{ij}u_iu_j,\,\bs u\in S_n\Bigr\}.
$$
The orbits $\gamma_t$ of \eqref{eq2:1} correspond to the curves $\Gamma_t$ with directions on $\Sigma$, see the introductory section for simple examples (Figs. \ref{fig0:1} and  \ref{fig0:2}).

We represent matrix $\bs A$ as a sum of two matrices:
$$
\bs A=\frac 12(\bs A+\bs A^\top)+\frac 12(\bs A-\bs A^\top)=\bs B+\bs C,
$$
that is, as a sum of symmetric ($\bs B$) and skew-symmetric ($\bs C$) matrices. From \eqref{eq2:2} it follows that the mean fitness depends only on the symmetric part of $\bs A$, which is a simple yet key observation in our analysis:
\begin{equation}\label{eq2:3}
\bar{m}(\bs u)=\bs{Au}\cdot \bs u=\bs{Bu}\cdot \bs u.
\end{equation}

This means that there exists an orthogonal transformation $\bs U$ such that
$$
\bs U^\top\bs B\bs U=\bs \Lambda=\diag(\lambda_1,\ldots,\lambda_n),
$$
where $\lambda_i$ are real eigenvalues of $\bs B$.

This orthogonal transformation $\bs u=\bs U\bs w$ brings the quadratic form \eqref{eq2:3} to the canonical form
$$
\bar m(\bs w)=\sum_{i=1}^k \lambda_i^+w_i^2+\sum_{j=k+1}^n\lambda_j^-w_j^2,
$$
where we denote $\lambda_i^+$ and $\lambda_j^-$ the positive and negative eigenvalues of $\bs B$ respectively. The same transformation maps the simplex $S_n$ into the convex set
$$
W_n=\{\bs w\in\R^n\colon \bs w\cdot\bs {U}^\top \bs 1=1,\,\bs{Uw}\geq 0,\,\bs 1=(1,\ldots,1)\in \R^n\}.
$$
This means that the maximum value of the fitness surface $\Sigma$ can be obtained as a result of solving the problem of convex programming
$$
\sum_{i=1}^k\lambda_i^+w_i^2\longrightarrow \max,\quad \bs w\in W_n.
$$
Solution to this problem always exists but can be non-unique. An analogous statement holds for the minimal value on the surface $\Sigma$.

Yes another approach to find the extrema on $\Sigma$ is to use the classical method of Lagrange multipliers. We write down Lagrange's function
$$
\mathcal L(\bs u)=\bs{Bu}\cdot \bs u-\mu(\bs u\cdot\bs 1-1),\quad \mu\in\R.
$$
The necessary condition for a local extremum takes the form
\begin{equation}\label{eq2:4}
    \bs{Bu}=\mu\bs 1.
\end{equation}
The equality $\bs u\cdot \bs 1=1$ implies that $\mu=\bs{B^{-1}\bs 1\cdot\bs 1}$. Vector $\bs u\in S_n$ is a local maximum (minimum) if the condition of being negative definite (respectively, positive definite) for the quadratic form
\begin{equation}\label{eq2:5}
    \bs{Bw}\cdot\bs w<0\quad (>0)
\end{equation}
is satisfied for any $\bs w\in\R^n $ for which
\begin{equation}\label{eq2:6}
    \bs w\cdot\bs 1=0
\end{equation}
holds.

Consider a basis in the orthogonal complement to $\bs 1$:
$$
\bs r_1=(1,-1,0,\ldots,0),\quad \bs r_2=(1,0,-1,\ldots,0),\quad\ldots,\quad\bs r_{n-1}=(1,0,\ldots,-1).
$$
Then $\bs w=\sum_{k=1}^{n-1}\xi_k\bs r_k$, and

\begin{equation}\label{eq2:7}
\bs{Bw}\cdot \bs w=\sum_{k=1}^{n-1} \beta_{ij}\xi_i\xi_j,\quad \beta_{ij}=\bs{Br}_i\cdot\bs r_j.
\end{equation}
Therefore, checking the condition of sign defiteness of the form \eqref{eq2:5} boils down to checking the sign defitness of the form \eqref{eq2:7}, which is generated by the matrix with elements $\beta_{ij}=\bs{Br}_i\cdot\bs r_j$.

In particular, condition \eqref{eq2:5} implies the following proposition, which also connects the notion of an evolutionary stable state with the fact that an equilibrium of \eqref{eq2:1} coincides with a local strict maximum of the fitness surface.
\begin{proposition}If an equilibrium $\bs{\hat u}\in S_n$ of the replicator equation \eqref{eq2:1} is evolutionary stable then it is a strict local maximum of the fitness surface $\Sigma$.
\end{proposition}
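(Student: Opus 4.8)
The plan is to work with the local-superiority characterization of evolutionary stability, namely that $\bs{\hat u}$ is evolutionary stable precisely when $\bs{\hat u}\cdot\bs{Au}>\bs u\cdot\bs{Au}$ for every $\bs u\ne\bs{\hat u}$ in a neighborhood of $\bs{\hat u}$ in $S_n$, and to convert this inequality into the sign-definiteness condition \eqref{eq2:5}--\eqref{eq2:6} that the preceding analysis identifies with a strict local maximum of $\Sigma$. First I would introduce the tangent displacement $\bs u=\bs{\hat u}+\bs w$ with $\bs w\cdot\bs 1=0$ and expand both sides to second order. On the fitness side, formula \eqref{eq2:3} together with $\bs B=\bs B^\top$ gives the exact identity
\[
\bar m(\bs u)-\bar m(\bs{\hat u})=2\,\bs{B\hat u}\cdot\bs w+\bs{Bw}\cdot\bs w,
\]
while on the selection side the skew part drops out of the quadratic term, $\bs w\cdot\bs{Aw}=\bs w\cdot\bs{Bw}$, so the ESS inequality reads $-\bs w\cdot\bs{A\hat u}-\bs{Bw}\cdot\bs w>0$.

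Next I would separate the two orders of this inequality. Replacing $\bs w$ by $\varepsilon\bs w$ and letting $\varepsilon\to0^+$ isolates the first-order content $\bs w\cdot\bs{A\hat u}\le 0$; for an interior $\bs{\hat u}$ the same applied to $-\bs w$ forces $\bs w\cdot\bs{A\hat u}=0$ for all tangent $\bs w$, i.e.\ the equilibrium relation $\bs{A\hat u}=\bar m\,\bs 1$. What then survives is the purely quadratic statement $\bs{Bw}\cdot\bs w<0$ for every $\bs w\ne\bs 0$ with $\bs w\cdot\bs 1=0$, which is exactly condition \eqref{eq2:5}--\eqref{eq2:6}; by the reduction to the $(n-1)\times(n-1)$ matrix $\beta_{ij}=\bs{Br}_i\cdot\bs r_j$ in \eqref{eq2:7} this is the negative definiteness of the Hessian of $\bar m$ along $S_n$. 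So the ESS hypothesis delivers the second-order half of the local maximum essentially for free.

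The main obstacle is the remaining first-order term $2\,\bs{B\hat u}\cdot\bs w$ in the expansion of $\bar m$: the strict inequality $\bar m(\bs u)<\bar m(\bs{\hat u})$ requires this term to be nonpositive on the admissible directions, and for an interior state it must in fact vanish. Here the equilibrium relation works against us, because it is phrased through the full matrix: $\bs{A\hat u}=\bar m\,\bs 1$ gives $\bs{B\hat u}=\bar m\,\bs 1-\bs{C\hat u}$, hence $\bs{B\hat u}\cdot\bs w=-\bs{C\hat u}\cdot\bs w$, so the linear term equals $-2\,\bs{C\hat u}\cdot\bs w$ and is governed entirely by the skew-symmetric part $\bs C$. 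The crux is therefore to show that $\bs{C\hat u}$ contributes nothing to the first-order variation --- equivalently, that an evolutionary stable equilibrium is a genuine critical point of $\bar m$, so that $\bs{B\hat u}$ (and not merely $\bs{A\hat u}$) is parallel to $\bs 1$, matching the Lagrange condition \eqref{eq2:4}.

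I expect this to be the delicate point of the whole statement. It holds automatically in the symmetric (partnership) case $\bs C=\bs 0$, where $\bs{A\hat u}=\bs{B\hat u}=\bar m\,\bs1$ and the first-order term is identically zero; in that situation the negative definiteness extracted above immediately closes the argument through \eqref{eq2:5}--\eqref{eq2:7}. For a genuinely non-symmetric $\bs A$ one must either restrict to a class of matrices that kills $\bs{C\hat u}\cdot\bs w$ on the tangent space or supply an additional structural hypothesis guaranteeing that the replicator equilibrium and the critical set of $\bar m$ coincide; accordingly, I would frame the proof so that the reduction to \eqref{eq2:5} is clearly the robust part and the vanishing of the skew contribution to the first variation is the hypothesis one actually has to secure.
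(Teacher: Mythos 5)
Your second paragraph is, in substance, the \emph{entire} proof given in the paper: set $\bs u=\bs{\hat u}+\epsilon\bs w$ with $\bs w\cdot\bs 1=0$, use the equilibrium relation $\bs{A\hat u}=\bar m(\bs{\hat u})\bs 1$ to kill the first-order term of the ESS inequality, and conclude $\bs w\cdot\bs{Bw}<0$ on the tangent space. The paper stops there and declares the proof complete. The difficulty you isolate afterwards --- that a strict local maximum of $\bar m$ on $S_n$ also requires the first-order variation $2\,\bs{B\hat u}\cdot\bs w=-2\,\bs{C\hat u}\cdot\bs w$ to vanish, i.e.\ the stationarity condition \eqref{eq2:4} for $\bs B$ rather than for $\bs A$ --- is simply not addressed in the paper. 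You are right that it cannot be waved away: the equilibrium condition constrains $\bs{A\hat u}$, not $\bs{B\hat u}$, the ESS condition only controls the quadratic term, and the missing ingredient is precisely $\bs{C\hat u}=\bs 0$, which appears as condition \eqref{eq2:11} in Theorem \ref{theor2:2} as an extra hypothesis on $\bs A$, not as a consequence of evolutionary stability.

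Your caution is justified: the statement is false without such an extra hypothesis. Take $n=3$, $\bs{\hat u}=(1/2,1/4,1/4)$ and
\[
\bs A=\begin{bmatrix}-1 & 7/4 & 7/4\\ -1/4 & 1/2 & 3/2\\ -1/4 & 3/2 & 1/2\end{bmatrix},\qquad \bs B=\begin{bmatrix}-1 & 3/4 & 3/4\\ 3/4 & 1/2 & 3/2\\ 3/4 & 3/2 & 1/2\end{bmatrix},\qquad \bs C=\begin{bmatrix}0 & 1 & 1\\ -1 & 0 & 0\\ -1 & 0 & 0\end{bmatrix}.
\]
Then $\bs{A\hat u}=\tfrac38\bs 1$, so $\bs{\hat u}$ is an interior equilibrium; moreover $\bs B=-\bs I+\bs v\bs 1^\top+\bs 1\bs v^\top$ with $\bs v=(0,3/4,3/4)$, hence $\bs w\cdot\bs{Bw}=-|\bs w|^2<0$ for every tangent $\bs w\neq\bs 0$ and $\bs{\hat u}$ is evolutionary stable (even globally so). Yet $\bs{C\hat u}=(1/2,-1/2,-1/2)\neq\bs 0$, and along the tangent direction $\bs w=(2,-1,-1)$ one computes $\bar m(\bs{\hat u}+t\bs w)-\bar m(\bs{\hat u})=-4t-6t^2$, which is positive for small $t<0$ (e.g.\ $\bar m(0.48,0.26,0.26)=0.4144>0.375=\bar m(\bs{\hat u})$). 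So this evolutionary stable state is not even a critical point of $\bar m$ on $S_3$. In short: the ``robust part'' of your plan coincides with the whole of the paper's argument, and the ``delicate point'' you flag is a genuine gap in the proposition itself; the correct statement is that an interior evolutionary stable state satisfying in addition $\bs{C\hat u}=\bs 0$ (in particular under the hypotheses of Theorem \ref{theor2:2}, or for symmetric $\bs A$) is a strict local maximum of $\Sigma$ --- and then the identity $\bar m(\bs{\hat u}+\bs w)-\bar m(\bs{\hat u})=\bs w\cdot\bs{Bw}$ closes the argument exactly as you describe.
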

\begin{proof}$\bs{\hat u}$ is evolutionary stable means (e.g., \cite{hofbauer1998ega})
$$
\bs{\hat u}\cdot\bs{Au}>\bs u\cdot \bs{Au}
$$
for all $\bs u$ in some neighborhood of $\bs{\hat u}$. Take $\bs u=\bs{\hat u}+\epsilon \bs w, \, \epsilon>0$. Since we require $\bs u\in S_n$ then $\bs w\cdot \bs 1=0$. The condition for the evolutionary stability implies
$$
\epsilon(\bs w\cdot\bs{A\hat u})+\epsilon^2(\bs w\cdot\bs{Aw})<0.
$$
Since $\bs{A\hat u}=\bar m(\bs{\hat u})\bs 1$, then the first term is zero and hence
$$
\bs w\cdot\bs{Aw}=\bs w\cdot\bs{Bw}<0,
$$
which concludes the proof.
\end{proof}
In other words, the existence of a strict maximum of the surface $\Sigma$ is a necessary condition for an equilibrium to be evolutionary stable.

Now let $\gamma_t$ be an orbit of \eqref{eq2:1} that approaches the equilibrium $\bs{\hat u}\in\Int S_n=\{\bs u\in S_n\colon u_i>0,\,i=1,\ldots,n\}$. Consider the conditions under which this would correspond to the case when the curve $\Gamma_t$ on the surface $\Sigma$ approaches an extremal point.
\begin{theorem}\label{theor2:2}An isolated equilibrium $\bs{\hat u}\in\Int S_n$ of \eqref{eq2:1} coincides with an extremal point on the surface $\Gamma$ if and only if the vectors that compose the rows of the skew symmetric matrix $\bs C=[c_{ij}]$ can be represented as linear combinations of the vectors that compose the rows of $\bs B=[b_{ij}]$:
\begin{equation}\label{eq2:8}
\bs c_i=\sum_{k=1}^n\mu_{ik}\bs b_k,\quad \bs b_k=(b_{k1},\ldots,b_{kn}),\quad \bs c_i=(c_{i1},\ldots,c_{in}),
\end{equation}
and, additionally,
\begin{equation}\label{eq2:9}
    \sum_{k=1}^n\mu_{ik}=0,\quad i=1,\ldots,n.
\end{equation}
\end{theorem}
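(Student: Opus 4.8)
The plan is to reduce each of the two defining properties to a statement of the shape ``an image vector is proportional to $\bs 1$,'' and then compare them. First I would write the conditions explicitly. Since $\bs{\hat u}\in\Int S_n$, being an equilibrium of \eqref{eq2:1} is the same as all coordinates of $\bs{A\hat u}$ being equal, i.e. $\bs{A\hat u}=c\,\bs 1$ with $c=\bs{A\hat u}\cdot\bs{\hat u}=\bar m(\bs{\hat u})$. Being an interior extremal point of $\Sigma$ is exactly the Lagrange condition \eqref{eq2:4}, $\bs{B\hat u}=\mu\,\bs 1$, and pairing it with $\bs{\hat u}$ (using $\bs 1\cdot\bs{\hat u}=1$) gives $\mu=\bs{B\hat u}\cdot\bs{\hat u}=\bar m(\bs{\hat u})$. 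I would also record that $\bs B$ is nonsingular, which is already presupposed in the Lagrange computation preceding the theorem, where $\bs B^{-1}$ appears.

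The pivotal observation is that the equilibrium $\bs{\hat u}$ is an extremal point of $\Sigma$ if and only if $\bs{C\hat u}=\bs 0$. Subtracting the two conditions gives $\bs{C\hat u}=(\bs A-\bs B)\bs{\hat u}=(c-\mu)\,\bs 1$, and here the skew-symmetry of $\bs C$ is decisive: because $\bs{\hat u}\cdot\bs{C\hat u}=0$ identically, pairing with $\bs{\hat u}$ forces $c=\mu$, whence $\bs{C\hat u}=\bs 0$. Conversely, if $\bs{C\hat u}=\bs 0$ then $\bs{B\hat u}=\bs{A\hat u}=c\,\bs 1$, so \eqref{eq2:4} holds and $\bs{\hat u}$ is extremal. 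At such a coinciding point $c\neq 0$, since $c=0$ would give $\bs{B\hat u}=\bs 0$ and hence $\bs{\hat u}\in\ker\bs B=\{\bs 0\}$, impossible for $\bs{\hat u}\in\Int S_n$. It then remains to re-encode $\bs{C\hat u}=\bs 0$ as the row conditions \eqref{eq2:8}--\eqref{eq2:9}, which I would do in the two directions separately.

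For $(\Rightarrow)$ I would start from coincidence, so $\bs{C\hat u}=\bs 0$ and $\bs{B\hat u}=c\,\bs 1$. Since $\bs B$ is nonsingular, its rows $\bs b_1,\dots,\bs b_n$ form a basis of $\R^n$, so every $\bs c_i$ has a unique expansion $\bs c_i=\sum_k\mu_{ik}\bs b_k$---this is \eqref{eq2:8}---and in matrix form $\bs C=\bs M\bs B$ with $\bs M=[\mu_{ik}]=\bs C\bs B^{-1}$. From $\bs{B\hat u}=c\,\bs 1$ we get $\bs{\hat u}=c\,\bs B^{-1}\bs 1$, hence $\bs M\bs 1=\bs C\bs B^{-1}\bs 1=\bs C\bs{\hat u}/c=\bs{C\hat u}/c=\bs 0$, which is precisely $\sum_k\mu_{ik}=0$ for all $i$, i.e. \eqref{eq2:9}. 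For $(\Leftarrow)$ I would argue structurally: \eqref{eq2:8}--\eqref{eq2:9} say exactly $\bs C=\bs M\bs B$ with $\bs M\bs 1=\bs 0$, so $\bs A=(\bs I+\bs M)\bs B$ and $(\bs I+\bs M)\bs 1=\bs 1$. An isolated interior equilibrium forces $\bs A$---and therefore both factors---to be nonsingular, so applying $(\bs I+\bs M)^{-1}$ to the equilibrium relation $(\bs I+\bs M)\bs{B\hat u}=c\,\bs 1$ yields $\bs{B\hat u}=c\,(\bs I+\bs M)^{-1}\bs 1=c\,\bs 1$; thus $\bs{\hat u}$ satisfies \eqref{eq2:4} and the equilibrium is extremal.

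The step I expect to be the main obstacle is not the algebra but securing the nondegeneracy that makes $\bs B^{-1}$ available and matching the precise row-wise phrasing. When $\bs B$ is nonsingular---as is already presupposed in the Lagrange computation preceding the theorem---condition \eqref{eq2:8} is automatic and all the content sits in \eqref{eq2:9}. When $\bs B$ is singular, \eqref{eq2:8} is the genuine requirement $\ker\bs B\subseteq\ker\bs C$, which does not follow from $\bs{C\hat u}=\bs 0$ alone; I would therefore either build nonsingularity of $\bs B$ into the standing hypotheses (so that the isolated interior extremal point is nondegenerate) or treat \eqref{eq2:8} as a separate condition in that case. Making this dichotomy explicit, rather than silently assuming that $\bs B^{-1}$ exists, is the one place where genuine care is required.
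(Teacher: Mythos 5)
Your proof is correct and takes essentially the same route as the paper: both reduce the coincidence of the equilibrium with an extremal point to the single condition $\bs{C\hat u}=\bs 0$ via skew-symmetry, and both recast \eqref{eq2:8}--\eqref{eq2:9} as $\bs C=\bs{\mathcal M}\bs B$ with $\bs{\mathcal M}\bs 1=\bs 0$. The one divergence is the converse direction --- the paper computes $\bs{C\hat u}=\bs{\mathcal M}\bs{B\hat u}=\bar m(\bs{\hat u})\bs{\mathcal M}\bs 1=\bs 0$ directly from the extremality relation \eqref{eq2:12}, whereas you invert $\bs I+\bs{\mathcal M}$ to obtain \eqref{eq2:4} from the equilibrium relation --- and your version is if anything more careful, since it makes explicit the nondegeneracy facts ($\bar m(\bs{\hat u})\neq 0$ and $\bs B$ invertible) that the paper uses tacitly when it divides by $\bar m(\bs{\hat u})$ and asserts that the rows of $\bs B$ are linearly independent.
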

\begin{proof}For the equilibrium point we have
\begin{equation}\label{eq2:10}
    \bs{A\hat u}=\bs{B\hat u}+\bs{C\hat u}=\bar m(\bs{\hat u})\bs 1.
\end{equation}
The solution to \eqref{eq2:4} is also a solution to \eqref{eq2:10} if and only if
\begin{equation}\label{eq2:11}
    \bs{C\hat u}=0,\quad \mu=\bar m(\bs{\hat{u}}).
\end{equation}
Indeed, the sufficiency of \eqref{eq2:11} is clear. Let us prove its necessity. We multiply the equalities \eqref{eq2:4} and $\eqref{eq2:10}$ by $\bs{\hat u}$ and use the fact $\bs{C\hat u}\cdot\bs{\hat u}=0$ to obtain $\bs{B\hat u}\cdot\bs{\hat u}=\bar m(\bs{\hat{u}})=\mu$ and therefore  \eqref{eq2:11} holds. Let us show now that in order for the solutions of
\begin{equation}\label{eq2:12}
    \bs{B\hat u}=\bar m(\bs{\hat{u}})\bs 1
\end{equation}
be in the kernel of $\bs C$, it is necessary and enough for the conditions \eqref{eq2:8} and \eqref{eq2:9} to hold. Assume that \eqref{eq2:8} and \eqref{eq2:9} hold. That is
$$
\bs C=\bs{\mathcal M}\bs B,\quad \bs{\mathcal M}=[\mu_{ij}].
$$
Hence
\begin{equation}\label{eq2:13}
    \bs{C\hat u}=\bs{\mathcal M B\hat u}=\bar m(\bs{\hat{u}})\bs{\mathcal M}\bs 1=0.
\end{equation}
In the opposite direction, since the rows of $\bs B$ linearly independent, we can always write \eqref{eq2:8} with some $\mu_{ij}$. Let $\bs{\hat u}$ solve \eqref{eq2:10}, therefore \eqref{eq2:13} holds, since $\bs{C\hat u}=0$, therefore, we must have $\bs{\mathcal M}\bs 1=0$, that is, the equalities \eqref{eq2:9}.
\end{proof}

\begin{corollary}Assume that conditions of Theorem \ref{theor2:2} hold and $\bs{\hat u}\in\Int S_n$ be a strict local maximum of the mean fitness. Then $\bs{\hat u}$ is an evolutionary stable state.
\end{corollary}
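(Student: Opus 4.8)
The plan is to read this corollary as the converse of the Proposition: there evolutionary stability was shown to force a strict local maximum, and here I would run essentially the same computation in reverse, using the hypotheses of Theorem \ref{theor2:2} to recover the single fact that does not come for free, namely that the maximum is actually an equilibrium of \eqref{eq2:1}.

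First I would extract the two standard consequences of $\bs{\hat u}\in\Int S_n$ being a strict local maximum of the mean fitness. The Lagrange condition \eqref{eq2:4} gives $\bs{B\hat u}=\mu\bs 1$, and pairing with $\bs{\hat u}$ (using $\bs{\hat u}\cdot\bs 1=1$) identifies $\mu=\bar m(\bs{\hat u})$, so that \eqref{eq2:12} holds; the second-order condition \eqref{eq2:5}--\eqref{eq2:6} gives $\bs w\cdot\bs{Bw}<0$ for every nonzero $\bs w$ with $\bs w\cdot\bs 1=0$. Next I would combine \eqref{eq2:12} with the hypotheses \eqref{eq2:8}--\eqref{eq2:9}: writing $\bs C=\bs{\mathcal M}\bs B$ with $\bs{\mathcal M}\bs 1=0$, the computation \eqref{eq2:13} yields $\bs{C\hat u}=\bar m(\bs{\hat u})\bs{\mathcal M}\bs 1=0$, whence $\bs{A\hat u}=\bs{B\hat u}=\bar m(\bs{\hat u})\bs 1$ and $\bs{\hat u}$ is a genuine equilibrium, not merely an extremum of $\Sigma$.

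Finally I would check evolutionary stability directly. For $\bs u=\bs{\hat u}+\bs w$ near $\bs{\hat u}$ with $\bs w\cdot\bs 1=0$ and $\bs w\neq 0$, expansion gives
$$
\bs{\hat u}\cdot\bs{Au}-\bs u\cdot\bs{Au}=-\bs w\cdot\bs{A\hat u}-\bs w\cdot\bs{Aw}.
$$
The first term vanishes because $\bs{A\hat u}=\bar m(\bs{\hat u})\bs 1$ and $\bs w\cdot\bs 1=0$, while skew-symmetry of $\bs C$ reduces the second to $-\bs w\cdot\bs{Bw}$, which is strictly positive by the second-order condition; this is exactly the defining inequality $\bs{\hat u}\cdot\bs{Au}>\bs u\cdot\bs{Au}$ used in the Proposition. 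The hard part, and the only place where Theorem \ref{theor2:2} enters, is making the first-order term disappear: for a general extremum of $\Sigma$ one has only $\bs w\cdot\bs{A\hat u}=\bs w\cdot\bs{C\hat u}$, a linear term that is generically nonzero and would dominate the quadratic $-\bs w\cdot\bs{Bw}$ for small $\bs w$, destroying stability. Conditions \eqref{eq2:8}--\eqref{eq2:9} are precisely what force $\bs{C\hat u}=0$ and thereby turn the mean-fitness maximum into a stable replicator equilibrium.
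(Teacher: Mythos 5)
Your proof is correct and follows essentially the same route as the paper's: both arguments use the Lagrange condition $\bs{B\hat u}=\bar m(\bs{\hat u})\bs 1$ together with \eqref{eq2:8}--\eqref{eq2:9} to obtain $\bs{C\hat u}=0$, hence $\bs{A\hat u}=\bar m(\bs{\hat u})\bs 1$, and then reduce the ESS inequality to strict local maximality of $\bar m$ at $\bs{\hat u}$ (your $-\bs w\cdot\bs{Bw}>0$ is exactly $\bar m(\bs{\hat u})-\bar m(\bs u)>0$, because the quadratic expansion of $\bar m$ about a critical point is exact). The only point worth flagging is that you invoke the second-order condition \eqref{eq2:5}--\eqref{eq2:6} as a \emph{consequence} of strict local maximality, whereas the paper states it only as a sufficient condition; this is legitimate here precisely because $\bar m$ is quadratic, so strict negative definiteness on $\{\bs w\colon\bs w\cdot\bs 1=0\}$ is equivalent to, not merely sufficient for, a strict local maximum.
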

\begin{proof}We have for any $\bs u\in S_n$ that
$$
\bs{\hat u}\cdot\bs{Au}=\bs{\hat u}\cdot\bs{Bu}+\bs{\hat u}\cdot\bs{C u}=\bs{B\hat u}\cdot\bs u-\bs{C\hat u}\cdot\bs u=\bs{B\hat u}\cdot\bs u.
$$
On the other hand
$$
\bar m(\bs{\hat u})=\bs{B\hat u}\cdot \bs u=\bs{A\hat u}\cdot\bs{\hat u}.
$$
By assumption
$$
\bs{A\hat{u}}\cdot \bs{\hat u}>\bs u\cdot\bs{Au},
$$
therefore
$$
\bs{\hat u}\cdot\bs{Au}>\bs u\cdot \bs{A u}.
$$
\end{proof}

\begin{corollary}Assume that conditions of Theorem \ref{theor2:2} hold. Then the graph of the curve $\Gamma_t$ on the surface $\Sigma$ has neither local minima nor local maxima for $\bs u(t)\neq \bs{\hat u},\,\bs u(t)\in \Int S_n$.
\end{corollary}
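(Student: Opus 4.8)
The plan is to interpret a local minimum or maximum of $\Gamma_t$ as a stationary point of the height function $t\mapsto\bar m(\bs u(t))$, and to show that none occurs in $\Int S_n$ away from $\bs{\hat u}$ by proving that $\bar m$ is strictly monotone along the orbit. Since $\bar m$ depends only on the symmetric part and $\bs B=\bs B^\top$, differentiating along \eqref{eq2:1} gives $\dot{\bar m}=2\,\bs{Bu}\cdot\dot{\bs u}$; substituting $\dot u_i=u_i\bigl((\bs{Au})_i-\bar m\bigr)$, $\bs A=\bs B+\bs C$ and $\bs u\cdot\bs{Bu}=\bar m$ yields
$$
\dot{\bar m}=2\Bigl(\sum_{i=1}^n u_i(\bs{Bu})_i^2-\bar m^2\Bigr)+2\sum_{i=1}^n u_i(\bs{Bu})_i(\bs{Cu})_i .
$$
The first bracket is the nonnegative fitness variance; it vanishes exactly when $\bs{Bu}$ is constant on the support of $\bs u$, i.e. when $\bs{Bu}=\mu\bs 1$. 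By the Lagrange condition \eqref{eq2:4}, and because $\bs B$ is nonsingular with $\bs u\cdot\bs 1=1$, this occurs only at the single point $\bs{\hat u}$. Hence in the partnership case $\bs C=0$ we would be finished: $\dot{\bar m}$ keeps a fixed sign and vanishes only at $\bs{\hat u}$.

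The heart of the argument is to show that the skew term $\sum_i u_i(\bs{Bu})_i(\bs{Cu})_i$ cannot produce a stationary value of $\bar m$ elsewhere, and this is where I would use \eqref{eq2:8}--\eqref{eq2:9}. Writing $\bs C=\bs{\mathcal M}\bs B$ and $\bs\delta=\bs{Bu}-\bar m\bs 1$, the condition $\bs{\mathcal M}\bs 1=0$ collapses $\bs{Cu}=\bs{\mathcal M}(\bar m\bs 1+\bs\delta)=\bs{\mathcal M}\bs\delta$, so the skew term becomes $\sum_i u_i(\bar m+\delta_i)(\bs{\mathcal M}\bs\delta)_i$, which is quadratic in the deviation $\bs\delta$ apart from the factor $\bar m\,\bs u\cdot\bs{\mathcal M}\bs\delta$. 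Linearizing at $\bs{\hat u}$ (where $\bs\delta=0$, $\bs C\bs{\hat u}=0$ and $\bs B\bs{\hat u}=\bar m\bs 1$) and setting $\bs x=\bs u-\bs{\hat u}\in\bs 1^{\perp}$, I expect the symmetric part of the skew contribution to drop out, leaving $\dot{\bar m}\simeq 2\sum_i\hat u_i(\bs B\bs x)_i^2>0$; this already gives strict monotonicity, and hence the absence of stationary points, in a neighborhood of $\bs{\hat u}$.

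I expect the global sign control of the skew term to be the main obstacle: unlike the variance, $\sum_i u_i(\bs{Bu})_i(\bs{Cu})_i$ is not sign-definite on its own, so one must exploit the skew-symmetry relation $\bs B\bs{\mathcal M}^\top=-\bs{\mathcal M}\bs B$ together with $\bs C\bs{\hat u}=0$ to keep it from overturning the variance. An equivalent, perhaps cleaner, formulation of the same difficulty is geometric: a local extremum of $\Gamma_t$ forces the orbit to be tangent to a level set of $\bar m$, i.e. $\bs{Bu}\cdot\dot{\bs u}=0$ with $\dot{\bs u}\neq0$, and the task is to rule out such tangencies for $\bs u\in\Int S_n\setminus\{\bs{\hat u}\}$. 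For this I would combine the fact that \eqref{eq2:8}--\eqref{eq2:9} force the unique stationary point $\bs{Bu}=\mu\bs 1$ of the surface to coincide with the equilibrium, with the second-order test \eqref{eq2:5}--\eqref{eq2:7} to fix the sign of the form near $\bs{\hat u}$; carrying this control outward from $\bs{\hat u}$ over the whole interior is the delicate point I would scrutinize most carefully.
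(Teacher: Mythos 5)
Your reading of the statement---that one must show $t\mapsto\bar m(\bs u(t))$ is strictly monotone on $\Int S_n\setminus\{\bs{\hat u}\}$---is not the reading the paper's proof uses, and it leads you into a step that cannot be completed. You correctly isolate the skew term $\sum_i u_i(\bs{Bu})_i(\bs{Cu})_i$ as the obstacle and control it only by linearization near $\bs{\hat u}$, leaving the global sign control as an admitted ``delicate point.'' That gap is not closable, because the monotonicity claim is false under the hypotheses of Theorem \ref{theor2:2}: take the fully symmetric hypercycle of Example \ref{ex3:1} with $k_1=k_2=k_3=1$, which satisfies \eqref{eq2:8}--\eqref{eq2:9}. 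There $\dot{\bar m}=3u_1u_2u_3+u_1u_3^2+u_1^2u_2+u_2^2u_3-2\bar m^2$, which is positive at $(0.6,0.3,0.1)$ but negative at $(0.7,0.05,0.25)$, so $\bar m$ genuinely rises and falls along interior orbits and the height function does have interior stationary points in $t$. So the difficulty you flag is not a technical one to be overcome; it is a sign that the route itself is wrong.

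The paper's proof takes a weaker (and evidently intended) reading of ``local extremum of the graph of $\Gamma_t$ on $\Sigma$'': a point $\bs u^\ast\in\Int S_n$ on the orbit that is a critical point of the fitness surface itself, i.e.\ satisfies the Lagrange condition \eqref{eq2:4}, $\bs{Bu}^\ast=\mu\bs 1$. Under \eqref{eq2:8}--\eqref{eq2:9} one then gets $\bs c_i\cdot\bs u^\ast=\mu\sum_k\mu_{ik}=0$ for all $i$, the same identities hold at $\bs{\hat u}$, and evaluating \eqref{eq2:14} along the segment $\lambda\bs u^\ast+(1-\lambda)\bs{\hat u}$ forces $\mu=\bar m(\bs{\hat u})$ and hence $\bs u^\ast=\bs{\hat u}$ (equivalently, and more directly: $\bs{Bu}=\mu\bs 1$ together with $\bs u\cdot\bs 1=1$ has the unique solution $\bs{\hat u}$ when $\bs B$ is nonsingular---an observation your own first paragraph already contains). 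If you adopt that reading, your uniqueness remark essentially finishes the proof and the entire struggle with the skew term becomes unnecessary; if you insist on your reading, no proof exists because the statement fails.
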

\begin{proof}Taking the derivative of the mean fitness along the orbits of  \eqref{eq2:1}, we get
\begin{equation}\label{eq2:14}
    \dot{\bar{m}}(\bs u(t))=\sum_{i=1}^n(\bs b_i\cdot \bs u(t))u_i(t)-\bar m^2(\bs u(t))+\sum_{i=1}^n(\bs b_i\cdot\bs u(t))(\bs c_i\cdot\bs u(t))u_i(t).
\end{equation}
Looking for a contradiction, assume that $\bs u^\ast\in\Int S_n$ be a local extremum of $\bar m$ for some $t=t_\ast$ and therefore the condition \eqref{eq2:4} holds with $\bs u=\bs u^\ast$. Consider the vector
$$
\bs u_\lambda=\lambda \bs u^\ast+(1-\lambda)\bs{\hat u},\quad 0\leq \lambda\leq 1.
$$
Since the simplex is convex, $\bs u_\lambda$ belongs to $\Int S_n$ for any admissible $\lambda$. On the other hand, from \eqref{eq2:4} it follows
$$
\bs b_i\cdot \bs u^\ast=\mu,\quad i=1,\ldots,n,
$$
and conditions \eqref{eq2:8} and \eqref{eq2:9} imply
$$
\bs c_i\cdot\bs u^\ast=\sum_{k=1}^n\mu_{ik}(\bs b_k\cdot\bs u^\ast)=\mu\sum_{k=1}^n\mu_{ik}=0.
$$
Similarly, for $\bs{\hat u}$ we have
$$
\bs b_i\cdot\bs{\hat u}=\bar m,\quad\bs c_i\cdot\bs{\hat u}=0,\quad i=1,\ldots,n,
$$
which means that
$$
\dot{\bar m}(\lambda\bs u^\ast+(1-\lambda)\bs{\hat u})=0,
$$
and hence $\bar m$ is constant along the interval  corresponding to $0\leq \lambda\leq 1$, which implies that $\mu=\bar m$ and hence $\bs u^\ast=\bs{\hat u}$.
\end{proof}

\section{Examples}
In this section we consider two simple examples that illustrate the theoretical results we obtained in the previous section.
\begin{example}\label{ex3:1}
Consider a \textit{hypercycle equation} \cite{eigenshuster} in $S_3$ with the matrix
$$
\bs A=\begin{bmatrix}
        0 & 0 & k_1 \\
        k_2 & 0 & 0 \\
        0 & k_3 & 0 \\
      \end{bmatrix}.
$$
The coordinates of the internal equilibrium are given by
$$
\hat u_i=\frac{\bar m}{k_{i+1}}\,,\quad i=1,2,3,\,k_4=k_1,\quad \bar m=\left(\sum_{i=1}^3\frac{1}{k_i}\right)^{-1}\,.
$$
Simple calculations yield
$$
\bs B=\frac 12\begin{bmatrix}
                0 & k_2 & k_1 \\
                k_2 & 0 & k_3 \\
                k_1 & k_3 & 0 \\
              \end{bmatrix},\quad \bs C=\frac 12\begin{bmatrix}
                                                  0 & -k_2 & k_1 \\
                                                  k_2 & 0 & -k_3 \\
                                                  -k_1 & k_3 & 0 \\
                                                \end{bmatrix}.
$$
The extremal point of the fitness surface is given by
$$
u_1=\frac{k_3(k_1+k_2)-k_3^2}{\sigma}\,,\quad u_2=\frac{k_1(k_2+k_3)-k_1^2}{\sigma}\,,\quad u_3=\frac{k_2(k_1+k_3)-k_2}{\sigma}\,,
$$
where
$$
\sigma=2(k_1k_2+k_2k_3+k_1k_3)-(k_1^2+k_2^2+k_3^2).
$$
We can use equation \eqref{eq2:7} to check that this point is actually a local maximum. We obtain the quadratic form
$$
\bs{Bw}\cdot\bs w=-2k_2\xi_1^2+(k_1+k_2)\xi_1\xi_2+k_1\xi_2^2-k_3\xi_2\xi_3,
$$
which is negative definite if
$$
2k_3(k_1+k_2)>(k_1-k_2)^2.
$$
In particular, the last condition holds if, for example, $k_1=0.25,k_2=0.3,k_3=0.35$. In this case
$$
\bs{\hat u}=(0.3272,0.2803,0.3925),
$$
whereas the fitness surface maximum is at the point $\bs u_m=(0.2692,0.3846,0.3462)$, moreover
$$
\bar{m}(\bs{\hat u})<\bar m(\bs u_m),
$$
which implies that the curve on the surface $\Sigma$ can both increase and decrease while the system moving along trajectory $\gamma_t$. Let us show in this example that if the conditions \eqref{eq2:8}, \eqref{eq2:9} hold then $\bs u$ and $\bs{\hat u}$ coincide.

We have
$$
\bs c_1=\frac{k_1}{k_3}\bs b_1-\frac{k_2}{k_3}\bs b_3,\quad \bs c_2=-\frac{k_3}{k_1}\bs b_1+\frac{k_2}{k_1}\bs b_3,\quad \bs c_3=\frac{k_3}{k_2}\bs b_1-\frac{k_1}{k_2}\bs b_2.
$$
Conditions \eqref{eq2:8}, \eqref{eq2:9} satisfied if and only if $k_1=k_2=k_3=k$. In this case, putting $k=1$, the eigenvalues of $\bs B$ are $\lambda_1=1,\lambda_2=\lambda_3=1/2$. The orthogonal transformation $\bs{u}=\bs{Uw}$ that brings the mean fitness to the canonical form $w_1^2-\frac 12(w_2^2+w_3^2)$ takes the form
$$
\bs U=\begin{bmatrix}
        \sqrt{3}/3 & 0 & 2/\sqrt{6} \\
        \sqrt{3}/3 & -\sqrt{2}/2 & -1/\sqrt{6} \\
        \sqrt{3}/3 & \sqrt{2}/2 & -1/\sqrt{6} \\
      \end{bmatrix}.
$$
The condition $\bs u\cdot\bs 1=\bs w\cdot\bs U^\top\bs 1=1$ implies the constraint $w_1=\sqrt{3}/3$ and simplex $S_3$ is being transformed into the convex set
$$
W_3=\Bigl\{\bs w\colon\frac{\sqrt{2}}{2}w_2-\frac{1}{\sqrt{6}}w_3+\frac 13\geq 0,\,\frac{\sqrt{2}}{2}w_2-\frac{1}{\sqrt{6}}w_3+\frac 13\geq 0,\,w_3\geq -\frac{\sqrt{6}}{6}\Bigr\}.
$$
That it, in new coordinates, the fitness surface if given by the two-sheeted hyperboloid
$$
\bar m(\bs w)=\frac 13-\frac 12(w_2^2+w_3^2),
$$
which is a conus with the vertex at the point $w_2=w_3=0$. In this particular case the equilibrium point coincides with the maximum of the fitness surface. Any orbit of the system corresponds to the curve $\Gamma_t$ on the surface $\Sigma$, which climbs the conus for $t\to\infty$ (see also Fig. \ref{fig0:3} for a graphical illustration).
\begin{figure}[!t]$\quad$
\includegraphics[width=0.37\textwidth]{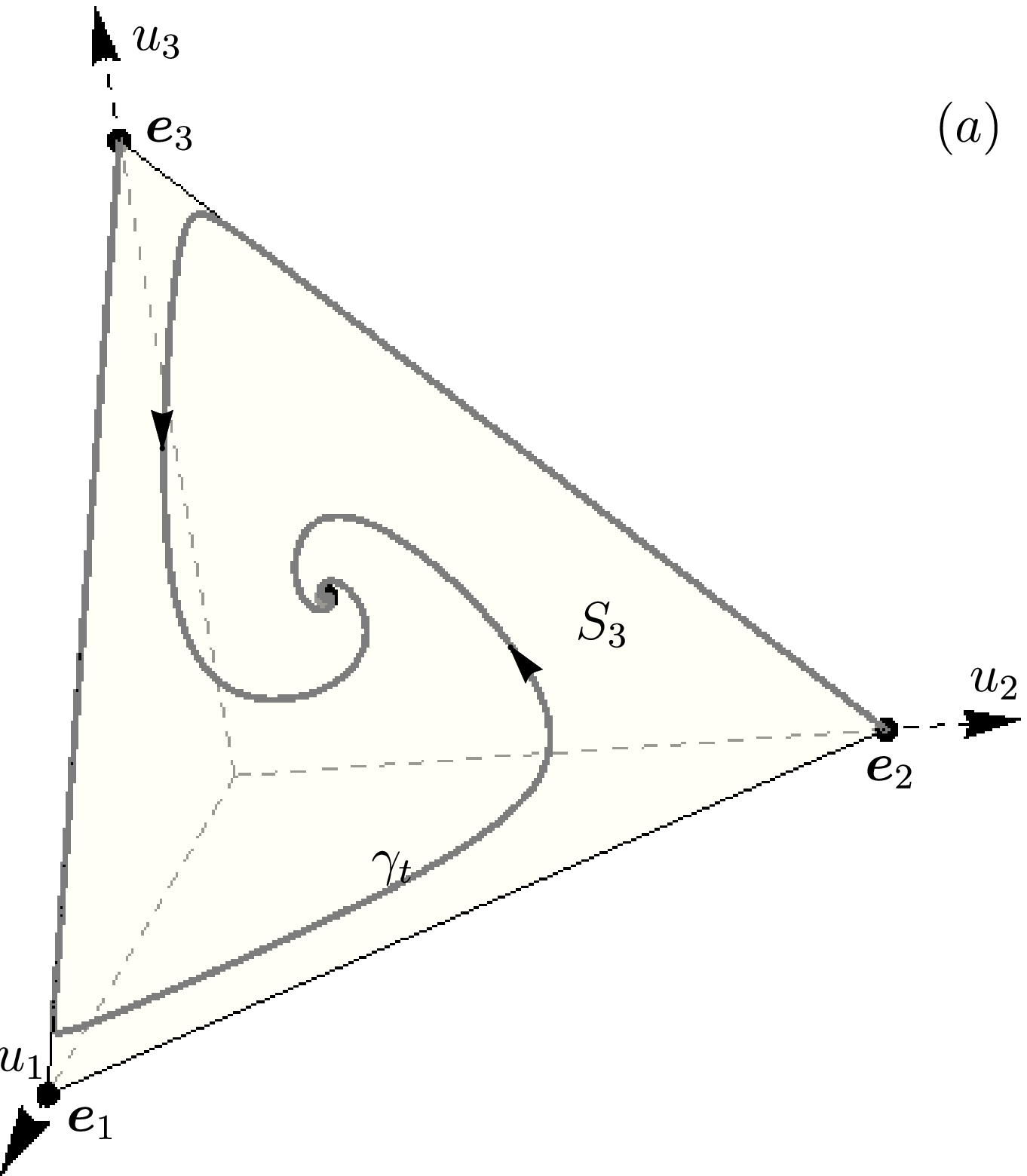}\hfill
\includegraphics[width=0.53\textwidth]{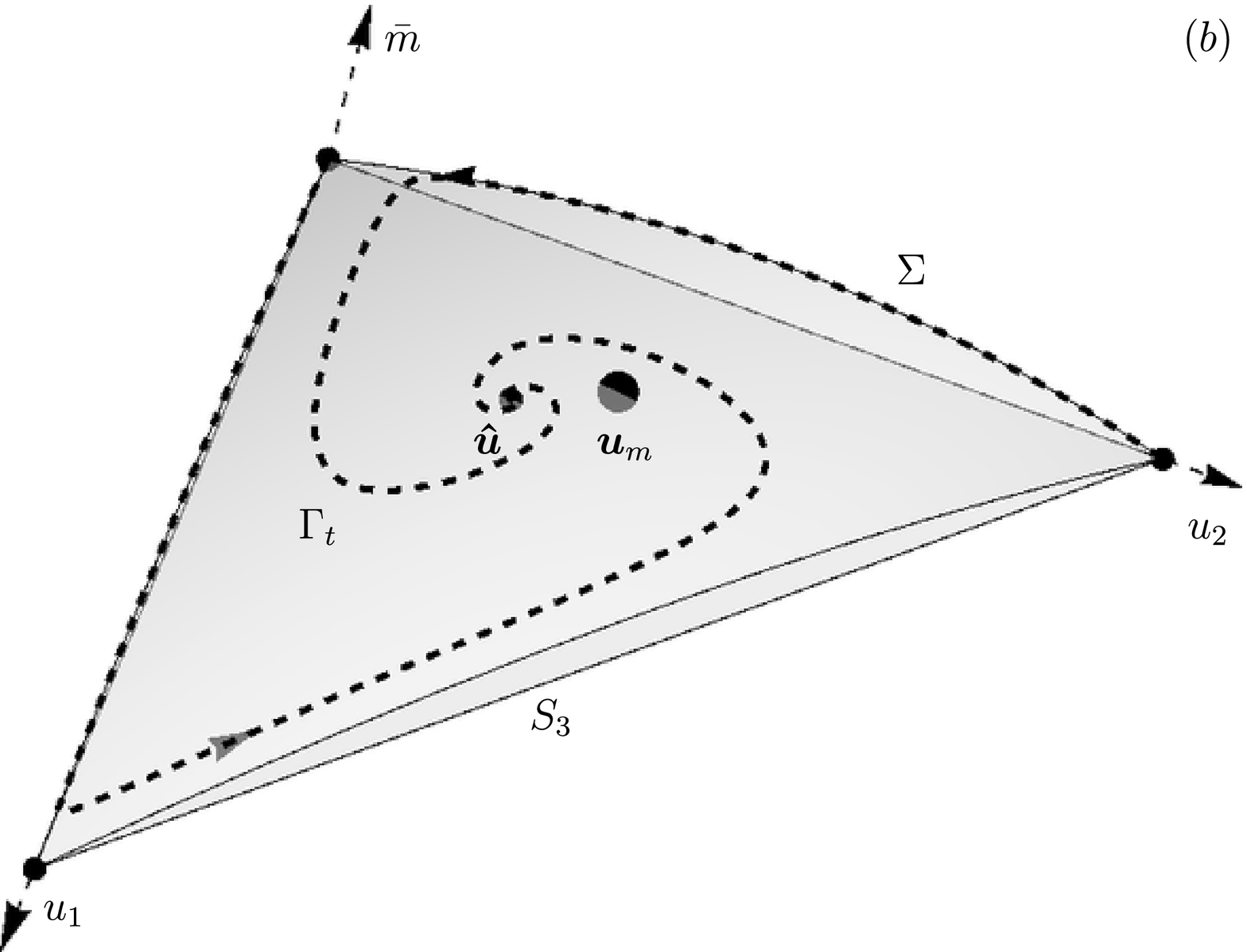}$\quad$
\caption{The phase space $(a)$ and adaptive fitness landscape $(b)$ of the hypercyclic system in Example \ref{ex3:1} on $S_3$. The parameters are $k_1=0.25,k_2=0.3,k_3=0.35$. Note that although there exists a strict Lyapunov function and the internal equilibrium $\bs{\hat u}$ is globally asymptotically stable, it is not evolutionary stable and does not coincide with the global maximum $\bs{u}_m$ on the mean fitness surface $\Sigma$.}\label{fig0:3}
\end{figure}

\end{example}

\begin{example}\label{ex3:2}
It is quite interesting to consider examples of fitness surfaces for which there are several local extrema and moreover periodic solutions exist.

Consider a replicator equation in $S_4$ with the matrix (we take this example from \cite{hofbauer1980dynamical})
$$
\bs A=\begin{bmatrix}
        0 & 0 & -\mu & 1 \\
        1 & 0 & 0 & -\mu \\
        -\mu & 1 & 0 & 0 \\
        0 & -\mu & 1 & 0 \\
      \end{bmatrix}.
$$
It is known that for $|\mu|<1$ the system is permanent, that is, its solutions are separated from the boundaries of the simplex, and for $\mu=0$ Poincar\'{e}--Andronov--Hopf bifurcation happens with appearance of a stable limit cycle for $\mu>0$. If $|\mu|>1$ then the system stops being permanent and its omega limit set is the heteroclinic cycle that contains the vertices of $S_4$.

In this case we have
$$
\bs B=\begin{bmatrix}
        0 & 1 & -2\mu & 1 \\
        1 & 0 & 1 & -2\mu \\
        -2\mu & 1 & 0 & 1 \\
        1 & -2\mu & 1 & 0 \\
      \end{bmatrix},
$$
and the eigenvalues of $\bs B$ are $\lambda_1=1-\mu,\lambda_{2,3}=\mu,\lambda_4=-(1+\mu)$. The orthogonal transformation is explicitly given by
$$
\bs U=\begin{bmatrix}
        1/2 & -\sqrt{2}/2 & 0 & 1/2 \\
        1/2 & 0 & -\sqrt{2}/2 & -1/2 \\
        1/2 & \sqrt{2}/2 & 0 & 1/2 \\
        1/2 & 0 & \sqrt{2}/2 & -1/2 \\
      \end{bmatrix},
$$
which yields
$$
\bar m(\bs w)=(1-\mu)w_1^2+\mu (w_2^2+w_3^2)+(1+\mu)w_4^2.
$$
From the condition $\bs w\cdot\bs U^\top\bs 1=1$ we find that $w_1=1/2$. Simplex $S_4$ is mapped to the convex set
$$
W_4=\{\bs w\colon -\frac 14-\frac{\sqrt{2}}{2}w_2+\frac 12 w_4\geq 0,\,\frac 14-\frac{\sqrt{2}}{2}w_3-\frac 12w_4\geq 0,\,\frac 14+\frac{\sqrt{2}}{2}w_2-\frac 12w_4\geq 0\}.
$$
If $|\mu|<1$ then the maximum value of the fitness surface is reached at $w_4=0$. The set $W_4$ is this case represents a square in the plane $(w_2,w_3)$ with the vertices at $p_i=\{\pm \sqrt{2}/4,\pm\sqrt{2}/4\}$, and at all fours vertices of this square the fitness surface attains local maxima
$$
\bar{m}(p_i)=\frac 14(1-\mu)+\frac 14 \mu=0.25.
$$
The internal equilibrium $\bs{\hat u}=(1/4,1/4,1/4,1/4)$ in the new variables is $\bs{\hat{w}}=(1/2,0,0,0)$, moreover
$$
\bar{m}(\bs{\hat{w}})=\frac{1}{4}-\frac{\mu}{4}\,,
$$
and we know that this equilibrium for $-1<\mu<0$ is asymptotically stable. In this case the fitness surface at this point has the global maximum, since $\bar{m}(\bs{\hat{w}})>\bar{m}(p_i)$. In general we have a system of five local summits, among which the central one has the highest fitness, and while the orbits $\gamma_t$ approach the equilibrium, the curves $\Gamma_t$ on $\Sigma$ can both increase and decrease while approaching $\bs{\hat w}$.

If $0<\mu<1$ then the curves $\Gamma_t$ will be approaching a closed curve that surrounds $\bs{\hat w}$, moreover, the maxima at the points $p_i,i=1,\ldots,4$ will be outside of this curve. On simplex $S_4$ the points $p_i$ correspond to the points $P_1=(0,0,1/2,1/2),P_2=(1/2,0,0,1/2),P_3=(1/2,1/2,0,0),P_4=(0,1/2,1/2,0)$. If $\mu>1$ then $P_i$ still give the maximal values of $\bar m(\bs u)$, however, $\bar m(\bs{\hat w})<0$ and now it is a global minimum of the fitness surface. This case corresponds to the heteroclinic cycle.

Finally, if $\mu<-1$ then $\bs{\hat u}$ is unstable and the maximum value of the form $\bar{m}(\bs w)$ is reached when $w_2=w_3=0$. It is reached at the points $Q_{1,2}=(1/2,0,0,\pm 1/2)$, and
$$
\bar{m}(Q_{1,2})=-\frac{\mu}{2}
$$
and hence $\bar m(Q_{1,2})>\bar m(\bs{\hat{w}})$. In this case we also have a heteroclinic cycle.

A schematic illustration of the system's dynamics and the corresponding structure of the mean fitness surface is given in Fig. \ref{fig0:4}.
\begin{figure}[!t]$\quad$
\includegraphics[width=0.45\textwidth]{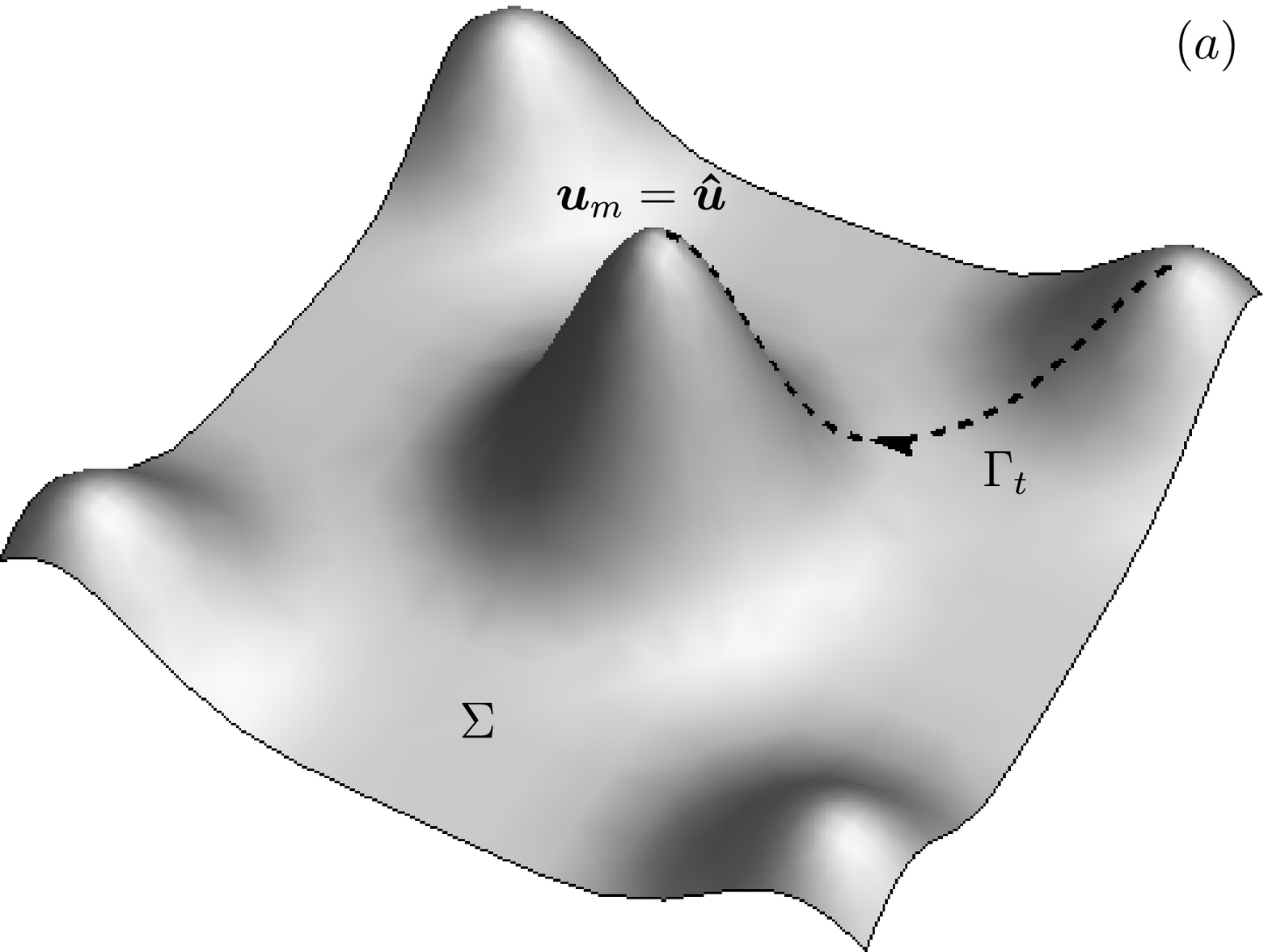}\hfill
\includegraphics[width=0.45\textwidth]{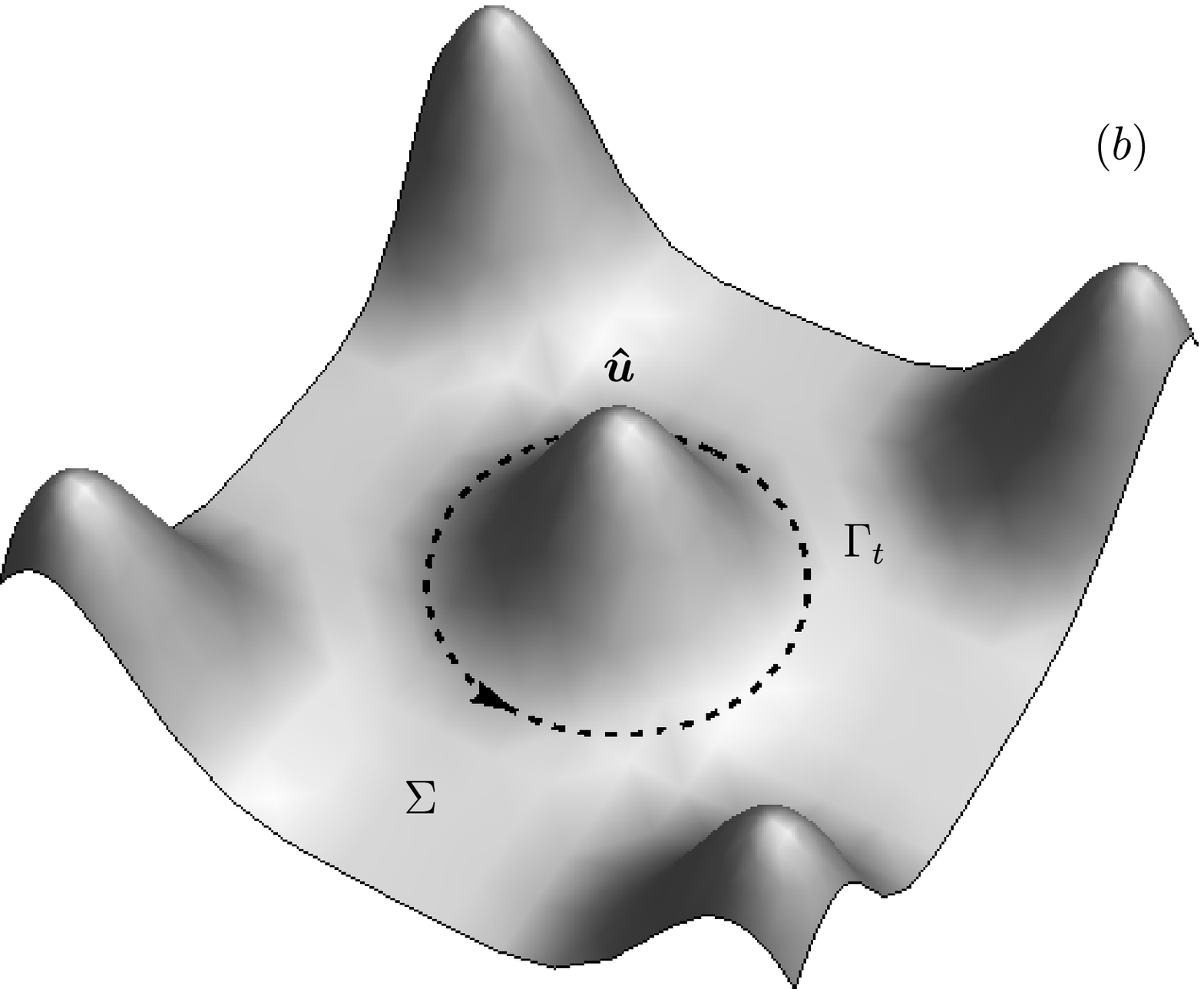}$\quad$
\caption{A schematic representation of the system's dynamics in Example \ref{ex3:2}. $(a)$ The unique internal equilibrium is also coincides with global fitness maximum and globally asymptotically stable. $(b)$ Now the internal equilibrium is no longer a global maximum, the solutions are now attracted to the limit cycle.}\label{fig0:4}
\end{figure}
\end{example}

\end{document}